\theoremstyle{plain}% default
\newtheorem{thm}{Theorem}
\newcommand{\lbold}{\boldsymbol{\lambda}}
\newcommand{\mc}[1]{\mathcal{#1}}
\newcommand{\mbb}[1]{\mathbb{#1}}
\newcommand{\tr}{\text{Tr}}
\newcommand{\abs}[1]{\left|#1\right|}
\newcommand{\pr}{{\rm Pr}}
\newcommand{\ct}{^{\dagger}}
\begin{document}

\title{Estimating outcome probabilities of quantum circuits using quasiprobabilities}

\author{Hakop Pashayan}
\affiliation{Centre for Engineered Quantum Systems, School of Physics, The University of Sydney, Sydney, NSW 2006, Australia}
\author{Joel J. Wallman}
\affiliation{Institute for Quantum Computing and Department of Applied Mathematics,
	University of Waterloo, Waterloo, Ontario, Canada, N2L 3G1
}
\author{Stephen D. Bartlett}
\affiliation{Centre for Engineered Quantum Systems, School of Physics, The University of Sydney, Sydney, NSW 2006, Australia}

\date{10 August 2015}

\begin{abstract}
We present a method for estimating the probabilities of outcomes of a quantum circuit using Monte Carlo sampling techniques applied to a quasiprobability representation. Our estimate converges to the true quantum probability at a rate determined by the total negativity in the circuit, using a measure of negativity based on the 1-norm of the quasiprobability. If the negativity grows at most polynomially in the size of the circuit, our estimator converges efficiently.  These results highlight the role of negativity as a measure of non-classical resources in quantum computation.
\end{abstract}

\maketitle

Estimating the probability of a measurement outcome in a quantum process using 
only classical methods is a longstanding problem that remains of acute interest
today. Directly calculating such probabilities using the Born rule is inherently
inefficient in the size of the quantum system, and efficiently estimating 
such probabilities for a generic quantum process is expected to be out of reach
of classical computers.  

Nonetheless, there are interesting and nontrivial classes of quantum circuits 
for which we \textit{can} efficiently estimate the probabilities of outcomes.  
The canonical example of such a class is that of stabilizer circuits. Such 
circuits can create highly-entangled states and perform many of the fundamental 
operations involved in quantum computing (teleportation, quantum error 
correction, distillation of magic states) but the celebrated Gottesman-Knill 
theorem allows such circuits to be classically simulated 
efficiently~\cite{AaronsonGottesman}. Other examples include fermionic linear optics/matchgates~\cite{Valiant2002,Divincenzo2002}, and some classes of quantum 
optics~\cite{Bartlett2002,Veitch2013b,Gurvits}. While these methods may be extended to 
include bounded numbers of operations outside of the class (for example, 
Ref.~\cite{AaronsonGottesman}), such extensions generally treat all 
operations outside of the class on an equal footing (for example, the cost of 
adding noisy magic states is the same as adding pure magic states) and so do 
not provide any insight into the relative resources of different operations.

In this Letter, we present a general method for estimating outcome probabilities for quantum circuits using quasiprobability representations. Simulation methods based on quasiprobability representations have a long history in physics~\cite{Gardiner2004}, and have recently been used in quantum computation to identify classes of operations that are efficiently simulatable~\cite{Veitch2012,Mari,Stahlke2014}. Our method allows for estimation in circuits wherein the quasiprobabilities may go negative. That is, while making the most efficient use of circuit elements that are represented nonnegatively, it nonetheless provides an unbiased estimator of the true quantum outcome probability regardless of the inclusion of more general elements that are negatively represented. We quantify the performance of this method by providing an upper bound on the rate of convergence of this estimator that scales with a measure of the total amount of negativity in the circuit.

\textit{Probability estimation.---}Consider quantum circuits of the following 
form. The circuit initiates with $N$ qudits ($d$-level quantum systems) in a 
product state, evolves through a circuit consisting of $L=poly(N)$ elementary 
gates that act nontrivially on at most a fixed number of qudits (for 
example, 1- and 2-qudit gates), and terminates with a product measurement, 
i.e., an independent measurement of each qudit. Universal quantum computation 
can be achieved with circuits of this form. Note that we do not include 
circuits with intermediate measurements and conditional operations based on 
their outputs (we return to this consideration in the discussion).

We aim to estimate the probability of a fixed outcome 
$\vec{o}=(o_1,\ldots,o_N)$ where $o_j$ denotes the outcome of the measurement 
on the $j$th qudit. (Note that \textit{estimation} of the probability of a 
fixed outcome is distinct from a \textit{simulation} as in 
Refs.~\cite{Veitch2012, Mari}, wherein different outcomes are sampled from this 
distribution.)  A natural benchmark for the precision of an estimator is the precision that 
can be obtained from sampling the quantum circuit itself. If we had
access to a quantum computer that implemented a circuit in this class, then we
could use it to estimate the probability of a fixed outcome by computing the
observed frequency $f_s(\vec{o})$ of outcome $\vec{o}$ over $s$ samples. By the Hoeffding inequality,
$f_s(\vec{o})$ will be within $\epsilon$ of the quantum probability $p(\vec{o})$
with probability $1-\delta$ provided the number of samples
$s(\epsilon,\delta)$ satisfies
\begin{equation}
s(\epsilon,\delta) \geq \tfrac{1}{2 \epsilon^2}\log{(2/\delta)}.
\end{equation}
This bound implies that for any fixed $\delta$, the number of samples required 
to achieve $\epsilon$ error scales polynomially in $1/\epsilon$. We call 
estimators satisfying this property \textit{poly-precision} estimators. (We 
distinguish these from \textit{exponential-precision} estimators, defined as 
estimators for which $s(\epsilon,\delta)$ scales logarithmically in 
$1/\epsilon$.)

Our central results are a classical algorithm that produces a poly precision estimate of a quantum circuit in the above class, and a bound on the efficiency of this algorithm based on a measure of the circuit's negativity in a quasiprobability representation.  

\textit{Quasiprobability representations.---}A quasiprobability representation of a qudit over $\Lambda$ is defined~\cite{Ferrie2009,Ferrie2011} by a frame $\{F(\lambda):\lambda\in\Lambda\}$ and 
a dual frame $\{G(\lambda):\lambda\in\Lambda\}$, which are (generally 
over-complete) bases for the space of 
Hermitian operators acting on $\mbb{C}^d$ satisfying 
$A = \sum_{\lambda\in \Lambda} G(\lambda)\tr[AF(\lambda)]$ for all  
$A$.  The space $\Lambda$ can be continuous or discrete, and although many quasiprobability representations assume a phase space (symplectic) structure on $\Lambda$, this is not necessary.
We can define quasiprobability distributions on $\Lambda$ associated with a quantum state $\rho$, a 
unitary operator $U$ and a measurement effect $E$ to be
\begin{align}
W_{\rho}(\lambda) &= \tr[F(\lambda)\rho] \,,	\notag\\
W_{U}(\lambda'|\lambda) &= \tr(F(\lambda') U G(\lambda) U^\dag)\,,	\notag\\
W(E|\lambda) &= \tr[E G(\lambda)]	\,.
\label{eq:quasidist}
\end{align}
Tensor products of these dual frames gives a dual frame for the product 
space, and so these definitions 
extend in the obvious way from a tensor product $(\mbb{C}^d)^{\otimes N}$ of $N$ qudits 
to distributions on a phase space $\Lambda^{N}$.

The distribution $W_{\rho}(\lambda)$ is real-valued and satisfies $\sum_{\lambda \in \Lambda} W_{\rho}(\lambda)=1$, much like a probability distribution, if the frame is normalized using $\sum_{\lambda \in \Lambda} F(\lambda) = I$.  Similarly, the distributions $W_{U}(\lambda'|\lambda)$ and $W(E|\lambda)$ are normalized like corresponding conditional probabilities.  The Born rule $\mathrm{Pr}(E|\rho, U) = \tr(E U\rho U^\dag)$, which gives the quantum probability for a measurement outcome given the state and process, is reproduced in the quasiprobability representation as would be expected in a probabilistic theory, by
\begin{equation}\label{eq:ontic_probability}
\mathrm{Pr}(E|\rho, U) = \sum_{\lambda,\lambda' \in \Lambda} W(E|\lambda')
W_{U}(\lambda'|\lambda)W_{\rho}(\lambda)	\,.
\end{equation}
This equation follows from the Born rule using the definition of the dual frames.  

Importantly, the distributions of a quasiprobability representation will generally take on negative values, and so cannot be directly interpreted as probability distributions. The 1-norm of a quasiprobability distribution provides a natural measure of the amount of negativity, i.e., how much it deviates from a true probability distribution.  We define the \textit{negativity} $\mc{M}_\rho$ of a state $\rho$ as the 1-norm of its quasiprobability representation, 
\begin{equation}
\label{eq:mana}
	\mc{M}_{\rho}= ||W_\rho ||_1 = \sum_{\lambda\in\Lambda}\abs{ W_{\rho}(\lambda)}\,.
\end{equation}
(The \textit{mana} of a state using the discrete Wigner representation was introduced in Ref.~\cite{Veitch2013} as a measure to bound the resources required for magic state distillation, and defined as the logarithm of the negativity used here.) Analogously, we define the \textit{negativity} $\mathcal{M}_E$ of a measurement effect $E$ to be
\begin{equation}
\mc{M}_E =\sum_{\lambda\in\Lambda}\abs{ W(E|\lambda)},
\end{equation}
and the \textit{point-negativity} $\mc{M}_U (\lambda)$ and \textit{negativity} $\mc{M}_U$ of a unitary $U$ to be
\begin{equation}
\mc{M}_U(\lambda) =  \sum_{\lambda'\in\Lambda}\abs{ W_U(\lambda'|\lambda)} , \quad
\mc{M}_U = \max_{\lambda\in \Lambda} \mc{M}_U(\lambda)	\,,
\end{equation}
respectively. The negativities for states, unitaries and effects are lower-bounded by 1, 1 and $\tr(E)$ respectively, with equality if and only if the quasiprobability representation is nonnegative. These negativities will serve as a measure of the cost of each circuit element in our estimator. 

\textit{Estimation procedure.---}A quasiprobability representation provides an interpretation of the Born rule as the expectation value of a stochastic process.  Specifically, if the quasiprobability representation of all elements in the circuit are nonnegative, then one may interpret the Born rule of Eq.~\eqref{eq:ontic_probability} as the expected probability of the measurement outcome averaged over a set of trajectories through phase space~\cite{Veitch2012,Mari}.  
This stochastic interpretation no longer holds if the quasi-probability representation for any of the input states, gates, or measurements is negative.  A standard perspective is that, in a quantum description, different trajectories in phase space can be assigned negative weights and can interfere with each other~\cite{Stahlke2014}. 
Monte Carlo sampling techniques may still be used, but the key problem is to identify an appropriate distribution to sample trajectories $\lbold =(\lambda_0,\ldots,\lambda_L)$ through phase space, where $\lambda_0$ is associated with the preparation and $\lambda_{l}$ and $\lambda_{l-1}$ are associated with the $l$th unitary.  Eq.~\eqref{eq:ontic_probability} becomes $\pr(E|\rho,U) = \sum_{\lbold} W(\lbold)$ with
\begin{equation}\label{eq:ontic_probability_chain}
W(\lbold) = W(E|\lambda_L)\bigl[\textstyle\prod_{l=1}^L 
W_U(\lambda_l|\lambda_{l-1})\bigr] W_{\rho}(\lambda_0) \,.
\end{equation}
Using an approach reminiscent of quantum Monte Carlo methods for fermion systems, we could sample trajectories from a true (nonnegative) probability distribution obtained from the absolute value of the quasiprobability, keeping track of the sign of the sampled trajectory.  Consider the distribution of trajectories given by 
\begin{align}\label{eq:LVUED}
\pr(A=\lbold) = 
\frac{\abs{W(\lbold)}}{\mc{M}_{\rm c}}
\end{align}
where $\mc{M}_{\rm c} = \sum_{\lbold}\abs{W(\lbold)}$ measures the negativity 
of the entire circuit, and 
we regard $\lbold$ as a realization of a random variable 
$A$. An estimate based on a single realization $\lbold$ is given by $\hat{q}_1(\lbold) = \mathcal{M}_{\rm c}\, {\rm Sign}[W(\lbold)]$, where ${\rm Sign}[\cdot] = \pm 1$ depending on the sign of the input.  The expected value of this estimate gives the desired Born rule probability
\begin{multline}
\langle \hat{q}_1(A) \rangle = \sum_{\lbold} \hat{q}_1(\lbold) \pr(A=\lbold)	= \sum_{\lbold} {\rm Sign}[W(\lbold)] |W(\lbold)| \\
  = \sum_{\lbold} W(\lbold) = \pr(E|\rho,U)\,. \label{eq:unbiased}
\end{multline}
Note that this estimator minimizes the range of $A$~\cite{Stahlke2014}, and so provides the best bound on the number of samples required to obtain a fixed precision using the Hoeffding inequality.  Sampling from the distribution~\eqref{eq:LVUED} is also the optimal estimator over the space of trajectories in that it has the smallest variance (see Appendix).
Unfortunately, without any additional structure, this estimator will in general be impractical for two reasons: there is no known efficient method to compute $\mc{M}_{\rm c}$, and sampling from the distribution~\eqref{eq:LVUED} will in general be inefficient in $N$.

To develop an efficient procedure, we sample trajectories $\lbold$ following a Markov chain, using (true) probabilities and conditional probabilities at each timestep.  Consider an input product state $\rho = \otimes_{n=1}^N \rho_n$, which has an efficient description $W_{\rho}(\lambda)$ in the quasi-probability representation which may be negative. We sample the initial point of the trajectory $\lambda_0$ from the modified distribution $\pr(\lambda_0)=\abs{W_{\rho}(\lambda_{0})}/\mc{M}_{\rho}$.  We construct a full trajectory $\lbold$ by sampling $\lambda_l$ at each timestep $l=1,\ldots,L$ in the circuit from the conditional distribution $\pr(\lambda_l|\lambda_{l-1}) = \abs{W_{U_l}(\lambda_l|\lambda_{l-1})}/\mc{M}_{U_l}(\lambda_{l-1})$ given by the unitary gate $U_l$.  If the unitary $U_l$ at each timestep $l$ of the circuit has an efficient description in the quasi-probability representation, for example it consists of quantum gates acting on a fixed number of systems, then this distribution can be sampled efficiently.  We note that this efficiency comes at a cost, as trajectories are no longer sampled from the optimal distribution~\eqref{eq:LVUED}.  

An estimate $\hat{p}_1$ based on a single trajectory $\lbold$ of our Markov chain protocol is given by 
\begin{multline}
\hat{p}_1(\lbold) =\mc{M}_{\rho}{\rm Sign}  [W_{\rho}(\lambda_{0})] \\ \times
\textstyle\prod_{l=1}^L \Bigl[ \mc{M}_{U_l}(\lambda_{l-1}){\rm Sign}[W_{U_l}(\lambda_l|\lambda_{l-1})] \Bigr]  W(E|\lambda_{L})	\,.
\end{multline}
Unlike $\hat{q}$, this estimate is guaranteed to be an efficiently computable function of the sampled path $\lbold$. We note that $\hat{p}_1$ can lie outside the unit interval, but nonetheless gives an unbiased estimate of the Born rule probability, $\langle \hat{p}_1(A) \rangle = \pr(E|\rho,U)$, precisely as in Eq.~\eqref{eq:unbiased}.  Further, we note that $\hat{p}_1$ lies in the interval 
$[-\mc{M}_\rightarrow,+\mc{M}_\rightarrow]$, where we have defined 
$\mc{M}_\rightarrow$ to be the \textit{total forward negativity bound} of the circuit:
\begin{equation}
\label{eq:forward}
\mc{M}_{\rightarrow} = \mc{M}_{\rho}\textstyle\prod_{l=1}^L\mc{M}_{U_l}\max_{\lambda_L}\abs{W(E|\lambda_L)}	\,.
\end{equation}
Let $\hat{p}_s$ be the average of $\hat{p}_1$ over $s$ independent samples of $\lbold$.
Using the boundedness and unbiasedness properties of $\hat{p}_1$, the Hoeffding inequality yields an upper bound on the rate of convergence of the average $\hat{p}_s$.   Specifically, $\hat{p}_s$ will be within $\epsilon$ of the quantum probability ${\rm Pr}(E|\rho,U)$ with probability $1-\delta$ if a total of 
\begin{equation}\label{eq:convergence}
  s(\epsilon,\delta)  = \tfrac{2}{\epsilon^2} \mc{M}_\rightarrow^2 \ln(2/\delta)
\end{equation}
samples are taken. Consequently, if the total forward negativity bound $\mc{M}_\rightarrow$ grows at most polynomially with $N$, then our protocol gives an efficient estimate $\hat{p}_s$ of the quantum probability $\pr(E|\rho,U)$ to within $\epsilon = 1/poly(N)$, with an exponentially small failure probability. That is, for circuits with a polynomially-bounded total forward negativity bound, $\hat{p}_s$ is a poly-precision estimator of the Born rule probability and we can sample $\hat{p}_s$ efficiently in $N$.  We note that the total forward negativity bound $\mathcal{M}_\rightarrow$ of~\eqref{eq:forward} is insensitive to the measurement negativity $\mathcal{M}_E$, instead depending only on $\max_{\lambda_L}\abs{W(E|\lambda_L)}$. 

Any efficiently computable symmetry of the Born rule can be used to give a variant on the procedure defined above.  The rate of convergence of the estimator need not be symmetric under these Born rule symmetries, and so such a variant may provide an advantage.  Two examples of such symmetries -- the time reversal symmetry that exchanges states and measurement effects in a unitary circuit, and the regrouping of unitaries into different elementary gates -- are explored in the Appendix.  In particular, a variant procedure is presented for which the total negativity bound is insensitive to the negativity of the initial state $\mathcal{M}_\rho$.

\textit{Example: Estimation with the discrete Wigner function.---}The odd-$d$ qudit stabilizer subtheory and the associated discrete Wigner function provide a 
canonical example for demonstrating the use of our algorithm; see also Ref.~\cite{Stahlke2014}.
Using this discrete Wigner representation for our estimation algorithm, the 
nonnegativity of the stabilizer subtheory~\cite{Cormick2006,Gross2006} ensures that stabilizer states, 
gates, and rank-1 measurements have negativity $\mc{M}_{\rho_n/U_l/E_n}=1$ and so are 
``free'' resources. Moreover, due to the existence of nonnegatively represented operations that are not in the stabilizer polytope~\cite{Veitch2012}, our approach is efficient on a strictly larger set of circuits than those of Ref.~\cite{AaronsonGottesman}.  Circuits with operations possessing negativity strictly greater than 1, such as magic states and non-Clifford gates, can still be estimated but now come at a cost.  Provided the total negativity bound grows at most polynomially in $N$, our protocol provides an efficient estimator.

As an example, consider a circuit with an input 
state given by a product state of $k$ qutrit magic states 
$\tfrac{1}{\sqrt{3}}(|0\rangle + \xi|1\rangle + \xi^8|2\rangle)$, with $\xi=\exp(2\pi i/9)$, together with 
stabilizer $|0\rangle$ states in a 100-qutrit random Clifford circuit, and 
estimate the probability of measuring $|0\rangle$ on the first qutrit of the 
output.  The total forward negativity bound of this circuit scales exponentially in $k$ and 
consequently the number of samples required to guarantee a fixed precision 
scales exponentially in $k$ by Eq.~\eqref{eq:convergence}. The results of our 
numerical simulations, shown in Fig.~\ref{fig:RandomCliffords}, indicate that 
our estimator does indeed converge with an appropriately chosen number of 
samples. Moreover, while the true precision of $\hat{p}_s$ in our simulations 
is often orders of magnitude better than the target precision, there are 
circuits that come close to saturating the target precision, suggesting that 
our bound cannot be substantially improved without further detailed knowledge 
of the circuit.

\begin{figure}
\centering
\includegraphics[width=0.95\linewidth]{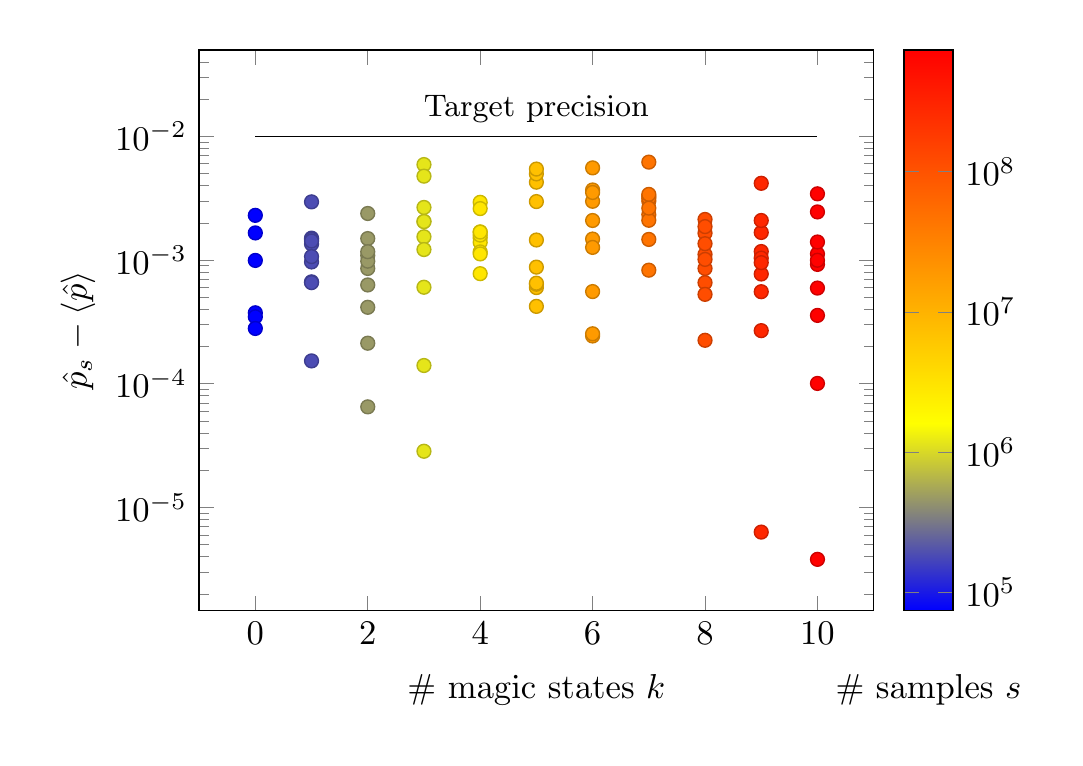}
\caption{(Color online) Plot of the difference $\hat{p}_s- 
\langle\hat{p}\rangle$ between the estimated probability and the true 
probability of the outcome $|0\rangle\langle 0|$ for the first qutrit as a function 
of the number of magic states $k$. Each data point represents a random 
100-qutrit Clifford circuit with the non-magic states initialized to the 
$|0\rangle$ state. The number of samples $s(k)$ was chosen using 
Eq.~\eqref{eq:convergence} with target precision $\epsilon = 0.01$ (indicated 
by the solid line) with 95\% confidence ($\delta =0.05$), so the number of 
samples increases exponentially with $k$ (color scale).
}\label{fig:RandomCliffords}
\end{figure}

\textit{Discussion.---}Our results highlight the role of the total negativity of a circuit as a resource required for a 
quantum computer to outperform any classical computer. In particular, any 
circuit element that is represented nonnegatively does not contribute to the total negativity bound and can be viewed as a ``free'' resource within the 
algorithm.  Other circuit elements have an associated cost quantified by their 
negativity, \textit{unless} they appear at the final timestep of the algorithm.  This latter observation motivates us to exploit the time-reversal and other symmetries of the Born rule, seeking to minimize the total forward (or reverse) negativity bound.  In particular, one could seek equivalent circuits wherein negative operations can be replaced with nonnegative ones by using negative initial states or measurements via gate teleportation.  By choosing the forward or reverse procedure as appropriate, the efficiency can be made insensitive to the negativity of these initial states or measurements.  

It also motivates us to identify quasiprobability representations in which 
many of the circuit elements of interest are represented nonnegatively.  
Interesting and relevant examples abound, beyond the well-studied qudit discrete Wigner 
function.  Discrete Wigner functions for qubits ($d=2$) can be defined for 
which all stabilizer states with real coefficients (rebits) and all 
CSS-preserving unitaries are nonnegatively represented~\cite{Delfosse2014}.  
The range of quasiprobability representations introduced in 
Ref.~\cite{Wallman2013} represent discrete subgroups of $U(2)$ on a single qubit nonnegatively, but have no nonnegative entangling gates; as
such representations can represent certain non-stabilizer single-qubit states nonnegatively, they may be useful for estimation in circuits for gate 
synthesis.  
There is also flexibility in how a quasiprobability representation is defined.  For example, a given quasiprobability representation can be modified to describe more 
states nonnegatively at the expense of a decreasing set of nonnegative 
measurements, and vice versa, by exploiting the structure of the dual frames.  Overcomplete frames provides freedom in the choice of dual frame, and the negativity of unitaries and measurements will depend on this choice.  As the dual frames formalism itself captures the relationship between quantum states and measurements, there is also freedom in the definition of the quasiprobability representation of unitaries beyond that given by Eq.~\eqref{eq:quasidist}.  Finally, again using the freedom in the choice of dual for overcomplete frames, it is possible to switch between frames throughout a single circuit.  These freedoms can be used to minimize the total negativity bound of the circuit, allowing more efficient estimators.

Our procedure can be applied to infinite-dimensional Hilbert spaces using any of the range of quasiprobability representations with continuous phase spaces developed in the study of quantum optics, by performing an appropriate discretization as in Ref.~\cite{Veitch2013b}.  In this case, the negativity of distributions is quantified by integrating the absolute value of the distributions over the phase space, and is directly related to the \textit{volume} of negativity~\cite{Kenfack2004}.  We note that the resulting estimator can be applied to quantum optics experiments including states and measurements with negative Wigner function, such as photon number Fock states, and so may provide additional insight into the classical simulation cost of boson sampling~\cite{Aaronson2011}.  While there exist means to efficiently estimate the outcome probability of a specific linear optics circuit with Fock state input and measurement~\cite{Gurvits}, our estimation procedure extends these results by providing a general method for estimating outcome probabilities of such linear optical circuits for any input and output together with a bound on the efficiency of this estimation based on the volume of negativity of these states.  In addition, our estimation can easily incorporate squeezing, as well as the loss and noise mechanisms common to linear optics experiments.

There are two natural ways to extend our results to circuits that include intermediate measurements and conditional operations based on them.  First, one could replace the measurement and conditional operation with a coherently-controlled operation, and delay the measurement to the end.  We note that such controlled operations can be negative, even if the measurement and classically-controlled operation are both nonnegative.  Second, our algorithm can be used to directly estimate the probabilities of the intermediate measurements and to sample from them.  In this case, the required precision is exponential in the number of intermediate measurements in order to calculate conditional probabilities for subsequent use in the algorithm.  Thus, in general, both approaches require resources that are exponential in the number of intermediate measurements. 

Finally, our estimation procedure provides insight into the study of operationally meaningful measures of non-classical resources in quantum computation.  Negativity in a quasiprobability representation has long been used as an indicator of quantum behaviour, but only recently has it been quantified as a resource for quantum computation~\cite{Veitch2013}.  Our results provide a related operational meaning of this resource:  as a measure that bounds the efficiency of a classical estimation of probabilities.  

\begin{acknowledgments}
The authors are grateful to J.~Emerson, C.~Ferrie, S.~Flammia, R.~Jozsa and A.~Krishna for helpful discussions. This work is supported by the ARC via the Centre of Excellence in Engineered Quantum Systems (EQuS) project number CE110001013 and by the U.S. Army Research Office through grant W911NF-14-1-0103.
\end{acknowledgments}

\newpage

\appendix

\section{Proof of optimality}

In this section of the Supplementary Information, we prove that sampling the distribution
\begin{align}\label{eq:LVUED}
\pr(A=\lbold) = 
\frac{\abs{W(\lbold)}}{\mc{M}_{\rm c}}
\end{align}
given in Eq.~(8) is optimal in that it gives the minimum variance of the 
Born rule estimator of all distributions. 

Our goal is to estimate $P \equiv \pr(E|\rho,U) = \sum_{\lbold} W(\lbold)$.  We will do so by sampling $\lbold$ from some distribution $p(\lbold)$, and compute an estimator given by $W(\lbold)/p(\lbold)$.  The variance $V$ of this estimator is given by
\begin{equation}
	V = \sum_{\lbold} \frac{[W(\lbold)]^2}{p(\lbold)} - P^2 \,.
\end{equation}
To minimize the variance, we choose $p(\lbold)$ to minimize the first term in this expression.  

\begin{thm}
The distribution $p(\lbold)$ that minimizes the variance is $p(\lbold) \propto \abs{W(\lbold)}$.
\end{thm}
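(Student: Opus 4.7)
The plan is to recognize this as a constrained optimization: minimize the functional $F[p] = \sum_{\lbold} W(\lbold)^2 / p(\lbold)$ over all probability distributions $p$ on trajectory space, i.e.\ subject to $p(\lbold) \geq 0$ and $\sum_{\lbold} p(\lbold) = 1$. The subtracted $P^2$ term is independent of $p$, so minimizing $V$ is equivalent to minimizing $F[p]$. A small preliminary observation is that $p(\lbold)$ must be strictly positive on every $\lbold$ with $W(\lbold) \neq 0$ (otherwise the estimator is ill-defined and the variance is formally infinite), while on the zero-set of $W$ we may set $p(\lbold) = 0$ without affecting the objective, so the real optimization is over the support of $W$.

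The cleanest route is a direct Cauchy--Schwarz argument. Writing
\begin{equation}
\abs{W(\lbold)} = \frac{\abs{W(\lbold)}}{\sqrt{p(\lbold)}} \cdot \sqrt{p(\lbold)}
\end{equation}
and summing over $\lbold$, the Cauchy--Schwarz inequality gives
\begin{equation}
\mc{M}_{\rm c}^2 = \Bigl(\textstyle\sum_{\lbold} \abs{W(\lbold)}\Bigr)^2 \leq \Bigl(\textstyle\sum_{\lbold} \frac{W(\lbold)^2}{p(\lbold)}\Bigr)\Bigl(\textstyle\sum_{\lbold} p(\lbold)\Bigr) = F[p],
\end{equation}
using the normalization constraint in the last equality. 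This produces a $p$-independent lower bound $F[p] \geq \mc{M}_{\rm c}^2$. The Cauchy--Schwarz equality condition is that $\sqrt{p(\lbold)}$ and $\abs{W(\lbold)}/\sqrt{p(\lbold)}$ are proportional as vectors indexed by $\lbold$, which is precisely $p(\lbold) \propto \abs{W(\lbold)}$; combined with normalization this forces $p(\lbold) = \abs{W(\lbold)}/\mc{M}_{\rm c}$, the claimed distribution.

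Finally I would verify the claim by direct substitution: plugging $p(\lbold) = \abs{W(\lbold)}/\mc{M}_{\rm c}$ back into $F$ gives $F = \mc{M}_{\rm c}\sum_{\lbold}\abs{W(\lbold)} = \mc{M}_{\rm c}^2$, saturating the Cauchy--Schwarz bound and hence confirming that this is the unique minimizer (up to the irrelevant freedom in how $p$ is defined off the support of $W$). An alternative route via Lagrange multipliers on $F[p] - \mu(\sum_\lbold p(\lbold)-1)$ yields $\partial F/\partial p(\lbold) = -W(\lbold)^2/p(\lbold)^2 = \mu$, giving $p(\lbold) \propto \abs{W(\lbold)}$ as the stationary point, with convexity of $F$ in $p$ ensuring it is a global minimum; this is essentially the same calculation. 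I do not expect any real obstacle here: the only subtlety is handling the support issue above, which is a one-line remark.
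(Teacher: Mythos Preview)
Your proof is correct and follows essentially the same route as the paper: both split $\abs{W(\lbold)}$ as $\bigl(\abs{W(\lbold)}/\sqrt{p(\lbold)}\bigr)\sqrt{p(\lbold)}$ and apply Cauchy--Schwarz to obtain the $p$-independent lower bound $\mc{M}_{\rm c}^2$, with equality at $p(\lbold)\propto\abs{W(\lbold)}$. Your additional remarks on the support issue, the direct verification, and the Lagrange-multiplier alternative are all fine embellishments but not needed beyond what the paper does.
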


\begin{proof}
The proof follows directly from the Cauchy-Schwarz inequality.  Consider
\begin{align}
	\left( \sum_{\lbold} \abs{W(\lbold)} \right)^2 &= \left( \sum_{\lbold} \frac{\abs{W(\lbold)}}{\sqrt{p(\lbold)}} \sqrt{p(\lbold)} \right)^2 \notag \\
	&\leq \left( \sum_{\lbold} \frac{[W(\lbold)]^2}{p(\lbold)} \right) \left( \sum_{\lbold} p(\lbold) \right) \notag \\
	&= \sum_{\lbold} \frac{[W(\lbold)]^2}{p(\lbold)}
\end{align}
The inequality is saturated for $p(\lbold) \propto |W(\lbold)|$, and therefore this distribution minimizes the variance $V$. 
\end{proof}

With this choice, the variance of the estimator is
\begin{equation}
	V_\text{min} = \left( \sum_{\lbold} \frac{\abs{W(\lbold)}}{\mc{M}_{\rm c}} 
	\right)^2 - P^2 \,.
\end{equation}
where $\mc{M}_{\rm c} = \sum_{\lbold}\abs{W(\lbold)}$.

\section{Exploiting symmetries of Born rule}

In this section of the Supplementary Information, we detail ways to exploit an efficiently computable 
symmetry of the Born rule to give a variant of our estimation algorithm.  If we replace the quantum circuit and measurement effect with another such that the Born rule probability remains the same, then Eq.~(10) provides two (in general) different estimators for this Born rule probability.  The rate of convergence of these estimator need not be 
the same under this symmetry, and so such a variant may provide an 
advantage.  

As an example, consider the `time reversal' symmetry that exchanges 
states and measurement effects in a unitary circuit (with some care  taken to 
appropriately normalise the distributions for states and effects).  One can 
define  a ``reverse protocol'' which produces a poly precision estimator 
$\hat{p}'_s$, provided that the \textit{total reverse negativity} of the circuit 
\begin{equation}
\mc{M}_{\leftarrow} = \mc{M}_{E}\textstyle\prod_{l=1}^L\mc{M}_{U_l\ct}\max_{\lambda_0}\abs{W_{\rho}(\lambda_0)}	\,,
\end{equation}
is polynomially bounded.  In general, $\mc{M}_{\rightarrow} \neq \mc{M}_{\leftarrow}$, as seen from 
\begin{equation}
  \label{eq:ratio}
\frac{\mc{M}_{\leftarrow}}{\mc{M}_{\rightarrow}}=\frac{\mc{M}_{E}}{\mc{M}_{\rho}}
\frac{\max_{\lambda_0}\abs{W_{\rho}(\lambda_0)}}{\max_{\lambda_L}\abs{W(E|\lambda_L)}}
\frac{\prod_{l=1}^L\mc{M}_{U_l\ct}}{\prod_{l=1}^L\mc{M}_{U_l}}
\,.
\end{equation}
Because both $\mc{M}_{\rightarrow}$ and $\mc{M}_{\leftarrow}$ are efficiently 
computable, one is free to choose the direction of simulation resulting in the 
faster estimator convergence rate.  (We note that $\mc{M}_{E} \geq \tr(E)$ 
while $\mc{M}_{\rho} \geq 1$, which suggests that the reverse protocol would 
have slower convergence when using a high rank effect. However, in such cases, 
${\max_{\lambda_L}\abs{W(E|\lambda_L)}}$ is in general larger 
than ${\max_{\lambda_0}\abs{W_{\rho}(\lambda_0)}}$ by a similar factor, cancelling the effect of $\tr(E)$ in the ratio \eqref{eq:ratio}.)

Another symmetry of the Born rule is the the regrouping of unitaries into different `elementary' gates, such as reexpressing $U = U_L U_{L-1}\cdots U_1$ as $U= U'_{L'}U'_{L'-1}\cdots U'_1$.  Different groupings can lead to different estimators, as we demonstrate with a simple example using a grouping of two unitaries into one, $U=U_2 U_1$.  We can estimate $p={\rm Tr}(U\rho U^{\dag}E)$ by sampling trajectories $\lbold=(\lambda_0,\lambda_1,\lambda_2)$ using $U_1$ and $U_2$, or directly by sampling $\lbold'=(\lambda_0,\lambda_2)$ using $U=U_2 U_1$ as a single step.  While both of these methods will produce an unbiased estimator of the Born rule, they will not converge at the same rate in general, as a result of the general inequality
\begin{equation}
\sum_{\lambda_1 \in \Lambda} \left[\frac{\abs{W_{U_2}(\lambda_2|\lambda_1)}}{\mc{M}_{U_2}(\lambda_1)} \frac{\abs{W_{U_1}(\lambda_1|\lambda_0)}}{\mc{M}_{U_1}(\lambda_0)} \right] \neq \frac{\abs{W_{U}(\lambda_2|\lambda_0)}}{\mc{M}_{U}(\lambda_0)}	\,.
\end{equation}
We note that equality holds in the case where $U_1$ and $U_2$ are both nonnegative.

\end{document}